\let\svtikzpicture\tikzpicture
\def\tikzpicture{\noindent\svtikzpicture}
\algnewcommand\algorithmicforeach{\textbf{for each}}
\newcommand{\quotes}[1]{``#1''}
\newtheorem{definition}{Definition}
\newtheorem{lemma}{Lemma}
\newtheorem{theorem}{Theorem}[section]
\begin{document}


\title{Graph-Based Proactive Secure Decomposition Algorithm for Context Dependent Attribute Based Inference Control Problem}



%
%
%
%

\numberofauthors{3} 

%
%
\author{
\alignauthor U\u{g}ur Turan\\
       \affaddr{Department of Computer Engineering}\\
       \affaddr{Middle East Technical University}\\
       \affaddr{06800, Ankara, Turkey}\\
       \email{ugur.turan@ceng.metu.edu.tr}
       \and
\alignauthor Ismail Hakk{\i} Toroslu \\
       \affaddr{Department of Computer Engineering}\\
       \affaddr{Middle East Technical University}\\
       \affaddr{06800, Ankara, Turkey}\\
       \email{toroslu@ceng.metu.edu.tr}
       \and
\alignauthor Murat Kantarc{\i}o\u{g}lu\\
       \affaddr{Department of Computer Science}\\
       \affaddr{University of Texas}\\
       \affaddr{at Dallas}\\
       \affaddr{Richardson, TX, USA}\\
       \email{muratk@utdallas.edu}
}

\maketitle

\begin{abstract}
Relational DBMS’s continue to dominate the database market, and inference problem on external schema of relational DBMS's is still an important issue in terms of data privacy. Especially for the last 10 years, external schema construction for application-specific database usage has increased its independency from the conceptual schema, as the definitions and implementations of views and procedures have been optimized. This paper offers an optimized decomposition strategy for the external schema, which concentrates on the privacy policy and required associations of attributes for the intended user roles. The method proposed in this article performs a proactive decomposition of the external schema, in order to satisfy both the forbidden and required associations of attributes. Functional dependency constraints of a database schema can be represented as a graph, in which vertices are attribute sets and edges are functional dependencies. In this representation, inference problem can be defined as a process of searching a subtree in the dependency graph containing the attributes that need to be related. The optimized decomposition process aims to generate an external schema, which guarantees the prevention of the inference of the forbidden attribute sets while guaranteeing the association of the required attribute sets with a minimal loss of possible association among other attributes, if the inhibited and required attribute sets are consistent with each other. Our technique is purely proactive, and can be viewed as a normalization process. Due to the usage independency of external schema construction tools, it can be easily applied to any existing systems without rewriting data access layer of applications. Our extensive experimental analysis shows the effectiveness of this optimized proactive strategy for a wide variety of logical schema volumes.
\end{abstract}

\section{Introduction}
As the demand towards automated systems and processes have increased, the technology in business applications has focused on to two different dimensions: application usage and statistical analysis. In each case, inference based privacy preserving techniques, in terms of databases, has been an important problem in order to protect the sensitive data. Modern approaches like differential privacy preserving techniques\cite{r1, rr13} or intentionally deception mechanisms provide secure ways to represent statistical results without revealing sensitive data. However their usage cannot be applied to traditional applications \cite{r21}. 

Many business applications aim to monitor and update single entity data. As an example, consider a call center module of a bank. When a customer wants to apply for a campaign, the operator should check her transaction history for prerequisites. Transactions are sensitive data, but, they cannot be altered by adding noise or hypothetical rows cannot be added for deception. The financial transactions should be viewed exactly as they are. If the task were a statistical analysis of transactions, then both techniques could have been applied to protect the inference of sensitive data. However these kinds of processes are mainly based on single entity business procedures, as standard application usage. That is, the call center employee should be able to access to a set of sensitive data according to assigned user role, and, the external layer of the database presented to this user role should not reveal any more information other than required. 

This objective needs three different perspectives. Firstly, the schema of external layer should be decomposed with a fine-grained attribute-based approach which preserves required associations for the user role and prevents any other inferences. This objective is the focus of this article, as necessary theorems and algorithms is proposed in this paper. The second perspective deals with inferences based on dynamic data distribution and the last one is about collaboration attacks \cite{r17,r31}. Both of them are also needed to be handled in order to satisfy privacy of sensitive data. The mechanisms for the last two inference channels can be applied as add-ons to the strategy given in this paper. However, the main and the first objective should be arranging the external layer for a specific user to prevent all unwanted inference operations. By definition, this is a proactive step, and it should be viewed as a policy-based normalization stage in terms of privacy. 

For this research, we have been motivated with a real life example. The problem was related with the recycling business application developed for a municipality in Antalya province in Turkey. The product was a web and point-of-sale application, in which, all citizens having smart cards were giving their recycling wastes to the waste-collector companies, and these companies were loading credits to the cards of citizens according to the current expected market value of the waste, also determined by type of the waste. In 2016, a citizen complained that she has been identified, and she is receiving messages in consistent with her consumption and recycling waste she has produced. It may be argued that many citizens may have nearly same consumption and thoroughly, waste statistics distributed within a year. However, the case is different as the claimant owns a hand-made gift company and has much more glass waste in November during preparation of gifts for new year’s day. Recycled waste collector companies were expected to query the time-based collection statistics of the trucks and the town management was also expected only to view the usage statistics of the system. Additionally, collector’s views are defined only as a subset of the town management external view set.

For a simplified description, the views are as follows:

\vspace{3mm}

[Available for the Collector and Town Management]

\vspace{1mm}

\texttt{View$_1$ = (\underline{TruckId, DateTime, GPSCoordinates},}

\hspace{8mm}\texttt{TotalWasteWeight, RecyclingWasteWeight,}

\hspace{8mm}\texttt{WasteType)}

\vspace{3mm}

[Available for only Town Management]

\vspace{1mm}

\texttt{View$_2$  =  (\underline{CitizenId}, Name, Surname, Address, }

\hspace{8mm}\texttt{PhoneNumber)}

\vspace{3mm}

A malicious worker in the town management can use GPSCoordinates attribute in association with the Address attribute and determine a small subset of citizens who had given glass waste in a period of time, using a simple GPS checking function $near$, such as: 

\vspace{3mm}

\texttt{Select * }

\texttt{From View$_1$, View$_2$ }

\texttt{Where near(GPSCoordinates, Address)}

\vspace{3mm}
Indeed, one malicious worker has shared this information with an advertisement company and they have used this information in favor of their glass-producer clients. It is not surprising to receive messages from other glass-producers for a company which always purchases glass. However this small gift company is a part of a charity organization and they use glasses they have collected from their members, and they do not purchase too much from the market. Therefore, it was not possible for glass-producers to determine this company for advertisement. As a result, the privacy of the citizen had been violated. The main reason behind the problem is the lack of security policy while defining views and the attribute association based cross-control in between views. As a solution, security dependent sets have been formed for the original database schema and the algorithms proposed in this paper have been applied. It should be noted that the relationship between GPSCoordinates and Address attributes is a probabilistic dependency, which can be treated as a kind of functional dependency. To overcome this privacy problem, following views are generated:

\vspace{3mm}

\texttt{New View$_1$  = (\underline{TruckId, DateTime}, WasteType, }

\hspace{8mm}\texttt{RecyclingWasteWeight, TotalWasteWeight)}

\vspace{3mm}

\texttt{New View$_2$ = (\underline{TruckId, GPSCoordinates})}

\vspace{3mm}

\texttt{New View$_3$  = (\underline{CitizenId}, Name, Surname, }

\hspace{8mm}\texttt{Address, PhoneNumber)}

\vspace{3mm}

Any join between “New View$_1$ ” and “New View$_2$ ” is not a meaningful join \footnote{Equi-join between primary and foreign keys} as TruckId is only foreign key, not a primary or candidate key, in these relations \cite{r12}.

This paper focuses on this problem and presents a complete mechanism to satisfy the goal. The core of the mechanism is based on the Functional Dependency Graph representation of database schema, which is constructed by defining attribute-sets as vertices and dependencies as edges. The aim is to find an external layer decomposition which strictly allows the required attribute associations and prevents inhibited associations, both in compliance with the privacy policy for a specific user role. Owing to the nature of domain, the proposed mechanism has attribute based granularity and the advantage of rearranging external layer without making any change on other layers. There may be other alternative approaches also for the kind of secure decomposition, introduced in this paper. The decomposition may be based on the forbidden attribute sets to satisfy the privacy or the required attributes sets needed for the user role \cite{r27}. The strategy given in this paper is an optimized combination of these two approaches and the most crucial step proposed in this paper, is to check the required sets together with the forbidden sets. This control step assures a fully compliant policy proactively.

The rest of this paper is organized as follows: the preliminaries and problem definition are given in the next section. In the following section, we first present graph-based representation of the problem and, then, formal definitions related to the problem and the algorithm for secure decomposition, which is based only on inhibited attribute sets, is given, which aims to minimize dependency loss. Afterward, the required attribute sets are defined, and the algorithm is extended to perform a complete policy-check and output a minimal secure decomposition, by the help of both user policy-based requirements and privacy policy. All algorithms are proven to produce secure decompositions. A brief related work section is given for literature review and experiments are also performed to show that the algorithm is also applicable even in large relational schemas. Future work and conclusions are given at the end of the paper.

\section{Preliminaries and Problem Definition}

Secure decomposition of the external schema to prevent unwanted inferences has been covered firstly in literature by \cite{A1} and \cite{A2}. These works concantrate on the required attribute sets (visibility constraints) and produce minimal sized fragments according to the dependencies and constraints. We have improved this process in our previous work \cite{r12} and we have defined, security dependency set concept and secure decomposition problem formally and we have proposed a decomposition algorithm. The algorithm aims to have maximal fragments (minimal dependency loss) according to forbidden set of attributes.

We use the following concepts (the original and fully formal definitions are in \cite{r12}) as:

\begin{itemize}
\item	\underline{Security dependent set} is the set of attributes from the logical schema, which should not be associated by using related schema and meaningful joins. The sets are determined logically with respect to the domain requirements.
\item	\underline{Meaningful join} is briefly an equi-join operation in between a foreign and primary keys. The inference problem is defined on inhibiting all possible meaningful joins, among the attributes of forbidden attribute sets.
\item	\underline{$R^+$} is given as the closure of all relations in a logical schema where the closure is determined by performing all possible meaningful joins to all relations of the schema.
\item	\underline{$F^+$} is the closure of all functional dependencies including the produced ones with transition, union and decomposition properties of functional dependencies.
\item	\underline{Identifiers} of an attribute are defined as the attribute set, to which the attribute is dependent to in $F^+$.
\item	\underline{Secure logical schema} is the one which prevents all associations of the security dependent sets by using $R^+$ and $F^+$. It is generated by a decomposition algorithm applied on the original schema. The algorithm is given in \cite{r12}.
\end{itemize}

The original definitions in \cite{r12} also includes the definition of probabilistic dependencies. For the sake of simplicity, they can be assumed to produce new security dependent sets as given in decomposition algorithm. Therefore, probabilistic dependencies are assumed to be inherently covered in all definitions and algorithms in this paper. Moreover, security dependent sets with a single attribute can easily be eliminated via removing this attribute from the schema, so security dependent sets are all assumed to be consisted of at least two attributes thereafter.

The decomposition algorithm, proposed in \cite{r12}, works as follows:

\begin{enumerate}
\item	Produce power set of all attributes for a relation.
\item	For each element set in this power set:
\begin{enumerate}
\item	Eliminate the element set, if it contains all attributes of any secure dependent set. 
\item	Eliminate the element set, if it contains any attribute in secure dependent set with its identifier.
\end{enumerate}
\item	Lastly, eliminate all element sets (trivial subsets), contained by other element sets.
\end{enumerate}

In order to illustrate the behavior of secure decomposition algorithm in \cite{r12}. We consider the following simple example:

Let the relation $R_k = \{A, B, C, D\}$ and $A$ is the primary key, single identifier for all other attributes, as the dependencies are illustrated in Figure~\ref{ex0}. Additionally, there is a single security dependent set as $\{B, C\}$.

\vspace{3mm}

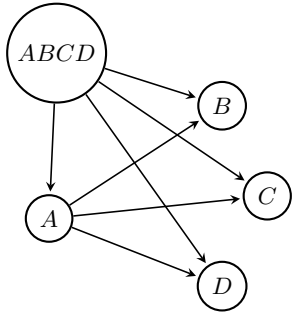
\begin{figure}
\centering
\begin{tikzpicture}[
            > = stealth, 
            shorten > = 1pt, 
            auto,
            semithick 
        ]

        \tikzstyle{every state}=[
            draw = black,
            thick,
            fill = white,
            minimum size = 4mm
        ]

			\node[state] (ABCD)  at (-0.4,-0.8) {$A B C D$};
        \node[state] (B) at (1.8,-1.5) {$B$};
        \node[state] (C) at (2.4,-2.7) {$C$};
			\node[state] (A) at (-0.5,-3) {$A$};
			\node[state] (D) at (1.8,-3.9) {$D$};

        \path[->] (ABCD) edge node {} (A);
			\path[->] (ABCD) edge node {} (B);
			\path[->] (ABCD) edge node {} (C);
			\path[->] (ABCD) edge node {} (D);
			\path[->] (A) edge node {} (B);
			\path[->] (A) edge node {} (C);
			\path[->] (A) edge node {} (D);

    \end{tikzpicture}
\caption{Dependency Graph}
\label{ex0}
\end{figure}

Therefore, the subsets containing both $B$ and $C$ needs to be eliminated in order to prevent the association between $B$ and $C$. Moreover, the subsets containing $\{A, B\}$ and $\{A, C\}$ will be eliminated as well, since $A$ is the identifier. After the elimination of trivial subsets, secure deposition of $R_k$ can be given as:

\vspace{3mm}

$R_{k_1} = \{A, D\}$	\hspace{3mm}  $R_{k_2} = \{B, D\}$	\hspace{3mm}  $R_{k_3} = \{C, D\}$

\vspace{3mm}

\begin{figure}
\centering
\begin{tikzpicture}[
            > = stealth, 
            shorten > = 1pt, 
            auto,
            semithick 
        ]

        \tikzstyle{every state}=[
            draw = black,
            thick,
            fill = white,
            minimum size = 4mm
        ]

			\node[state] (AD)  at (-2.4,-0.8) {$A D$};
        \node[state] (BD) at (-0.6,-0.8) {$B D$};
        \node[state] (CD) at (1.2,-0.8) {$C D$};
        \node[state] (B) at (0.6,-3) {$B$};
        \node[state] (C) at (2.4,-3) {$C$};
			\node[state] (A) at (-3,-3) {$A$};
			\node[state] (D) at (-1.2,-3) {$D$};

        \path[->] (AD) edge node {} (A);
			\path[->] (AD) edge node {} (D);
			\path[->] (A) edge node {} (D);
			\path[->] (BD) edge node {} (D);
			\path[->] (BD) edge node {} (B);
			\path[->] (CD) edge node {} (D);
			\path[->] (CD) edge node {} (C);

    \end{tikzpicture}
\caption{Dependency Graph After Strong-Cut}
\label{ex01}
\end{figure}
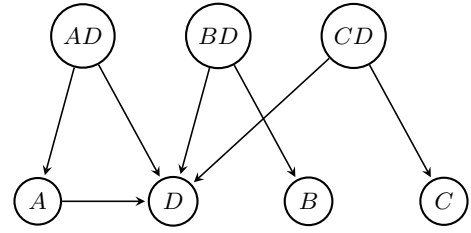
As it can be seen from the new decomposed schema, there is no way to perform a meaningful join between decomposed sets to associate $B$ and $C$. As a graph notation (details will be discussed later) Figure~\ref{ex01} shows the dependecnies formed after the decomposition and according to this figure, there is no way to associate $B$ and $C$ together, starting from a vertex in this graph. In other words, if the ways of associate secure dependent sets attributes are defined as a chain of functional dependencies through meaningful joins, the algorithm breaks these chains from both sides for both attributes. It is obvious that the relations containing the security dependent set should be removed, but the algorithm in \cite{r12}, breaks association of each attribute in security dependent set with its identifiers to prevent all meaningful joins. In this paper, we call this strategy as a $strong-cut$ approach.

However, this $strong-cut$ approach can be relaxed by cutting the chains only at a single point by producing:

\vspace{3mm}

$R_{k_1} = \{A, C, D\}$ 	\hspace{3mm}  $R_{k_2} = \{B, D\}$ 

\vspace{3mm}

The dependencies of this schema is depicted in Figure~\ref{ex02}. There is also another possible schema as:

\vspace{3mm}

$R_{k_1} = \{A, B, D\}$	\hspace{3mm}  $R_{k_2} = \{C, D\}$

\vspace{3mm}

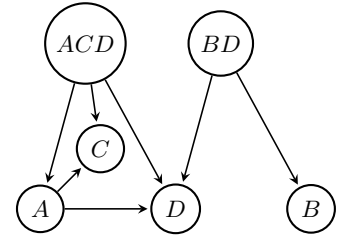
\begin{figure}
\centering
\begin{tikzpicture}[
            > = stealth, 
            shorten > = 1pt, 
            auto,
            semithick 
        ]

        \tikzstyle{every state}=[
            draw = black,
            thick,
            fill = white,
            minimum size = 4mm
        ]

			\node[state] (ACD)  at (-2.4,-0.8) {$A C D$};
        \node[state] (BD) at (-0.6,-0.8) {$B D$};
        \node[state] (B) at (0.6,-3) {$B$};
        \node[state] (C) at (-2.2,-2.2) {$C$};
			\node[state] (A) at (-3,-3) {$A$};
			\node[state] (D) at (-1.2,-3) {$D$};

        \path[->] (ACD) edge node {} (A);
			\path[->] (ACD) edge node {} (D);
			\path[->] (ACD) edge node {} (C);
			\path[->] (A) edge node {} (D);
			\path[->] (A) edge node {} (C);
			\path[->] (BD) edge node {} (D);
			\path[->] (BD) edge node {} (B);

    \end{tikzpicture}
\caption{Dependency Graph After Weak-Cut}
\label{ex02}
\end{figure}

\vspace{3mm}

Both of these schemas are consistent with the privacy constraint defined by security dependency set. The aim of this work is to develop a $relaxed-cut$ algorithm that decomposes the original schema with a minimum loss of functional dependencies while satisfying the security constraints.

The motivation of this work can be described as developing a decomposition that should not be much lossier than needed. This requirement defines an optimization problem and to the best of our knowledge, this is the first attempt in literature to construct an optimized secure decomposition satisfying the policy, while minimizing the dependency loss. 

Directed graph representation is selected as the most suitable mathematical model to represent the problem, since a functional dependency can be easily demonstrated as a directed edge and by this way, all algorithmic background in the graph theory can be used for further enhancements of the concept and algorithm. As a result, a new algorithm will be proposed for secure decomposition concept, which aims to decompose the original schema minimally by preserving the idea of prohibiting decomposed relations to be used in meaningful joins to associate the attributes in a security dependent set.

The problem is basically building a decomposition of the original schema, as any set of securely dependent attributes cannot be associated by joins on keys. 

This paper also introduces required attribte set definition and the relaxed-cut algorithm is improved with a consistency check among forbidden and required set attributes. 

\section{Relaxed-Cut Secure Decomposition Algorithm}

We firstly give the basic definitions by using graph notation.

\vspace{1.6mm}

\begin{definition}
\textbf{Functional Dependency Graph}
\textit{(denoted as FDG hereafter, for a schema)} The given functional dependency set (F) of a logical schema (where F is decomposed - i.e., there is a single element on the right hand side - and thus for each functional dependency $F_i$ such as  $A_i \rightarrow A_j$, $A_i$ is an attribute set and $A_j$ is a single attribute) can be represented as a directed graph $G = (V, E)$ as follows:

\begin{itemize}
\item	In a normalized schema, all attributes are expected to exist in $A_j$, but the schema may not be normalized, so each attribute should also be element of $V$ individually.
\item	Each one of $A_i$ and $A_j$ is a single node in $V$
\item	Each relation (attribute sets) is an individual node in $V$.
\item	Each dependencies of $F_i^+$ is an edge in $E$, if both sides of the dependency exist as a different node in $V$.
\end{itemize}
\end{definition}

\vspace{1.6 mm}

Example-1: Assume that the logical schema $S$ consists of four relations $R_1$, $R_2$, $R_3$ and $R_4$:

\vspace{3mm}

$R_1 = \{ A, B, C, D\}$  

$F_{R_1} = \{ A \rightarrow B, A \rightarrow C, A \rightarrow D,$ 
 
$ \quad \quad \quad \; \; D \rightarrow A, D \rightarrow B,  D \rightarrow C \}$

\vspace{3mm}

$R_2 = \{ E, F, G, H\}$  

$F_{R_2} = \{ E \rightarrow F, E \rightarrow G, E \rightarrow H,$

$ \quad \quad \quad \; \; H \rightarrow E, H \rightarrow F,  H \rightarrow G\}$

\vspace{3mm}

$R_3 = \{ A, E \}$  where $A$ and $E$ are foreign keys.

\vspace{3mm}

$R_4 = \{ H, D \}$  where $H$ and $D$ are foreign keys.

\vspace{3mm}

Then, the graph (FDG) constructed for this schema is given in Figure~\ref{ex1}.

\vspace{3mm}

\begin{figure}
\centering
\begin{tikzpicture}[
            > = stealth, 
            shorten > = 1pt, 
            auto,
            semithick 
        ]

        \tikzstyle{every state}=[
            draw = black,
            thick,
            fill = white,
            minimum size = 4mm
        ]

			\node[state] (EFGH)  at (-0.4,-0.8) {$E F G H$};
        \node[state] (F) at (1.8,-1.5) {$F$};
        \node[state] (G) at (2.4,-2.7) {$G$};
			\node[state] (E) at (-0.5,-3) {$E$};
			\node[state] (H) at (1.8,-3.9) {$H$};

        \path[->] (EFGH) edge node {} (E);
			\path[->] (EFGH) edge node {} (F);
			\path[->] (EFGH) edge node {} (G);
			\path[->] (EFGH) edge node {} (H);
			\path[->] (E) edge[red,very thick] node {} (F);
			\path[->] (E) edge node {} (G);
			\path[->] (E) edge node {} (H);
			\path[->] (H) edge node {} (F);
			\path[->] (H) edge node {} (G);
			\path[->] (H) edge node {} (E);

			\node[state] (ABCD)  at (4.6,-0.8) {$A B C D$};
        \node[state] (B) at (6.8,-1.5) {$B$};
        \node[state] (C) at (7.4,-2.7) {$C$};
			\node[state] (A) at (4.5,-3) {$A$};
			\node[state] (D) at (6.8,-3.9) {$D$};

        \path[->] (ABCD) edge node {} (A);
			\path[->] (ABCD) edge node {} (B);
			\path[->] (ABCD) edge node {} (C);
			\path[->] (ABCD) edge node {} (D);
			\path[->] (A) edge[red,very thick] node {} (B);
			\path[->] (A) edge node {} (C);
			\path[->] (A) edge node {} (D);
			\path[->] (D) edge node {} (B);
			\path[->] (D) edge node {} (C);
			\path[->] (D) edge node {} (A);

			\node[state] (AE) at (0.5,-6) {$A E$};

        \path[->] (AE) edge[red,very thick] node {} (A);

			\path[->] (AE) edge[red,very thick] node {} (E);

			\node[state] (HD) at (5,-6) {$H D$};
        \path[->] (HD) edge node {} (H);
			\path[->] (HD) edge node {} (D);

    \end{tikzpicture}
\caption{$FDG$ of Example-1}
\label{ex1}
\end{figure}
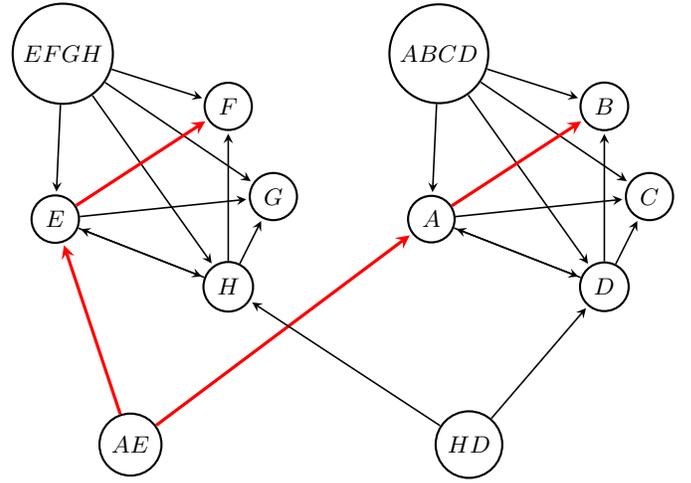

\begin{algorithm}[H]
\caption{Constructing Functional Dependency Graph (FDG)}
\label{array-sum}
\begin{algorithmic}[1]
\Require
 \Statex $S$: logical schema as $(R, F)$,
\Ensure
 \Statex $FGD_S = (V, E)$: functional dependency graph of $S$
    \State \textbf{begin}
    \State $V \leftarrow \{\}$ 
    \State $E \leftarrow \{\}$ 
    \State //Step-1
    \ForEach {$F_i (X_i \rightarrow Y_i) \in F$}
    		\If { $\left| Y_i \right| > 1$}
    			\State remove $F_i$ from $F$
    			\ForEach {$Y_j \in Y_i$}
    				\State add $X_i \rightarrow Y_j$ to $F$
    			\EndFor
    		\EndIf
    	\EndFor
    	\State //Step-2
    	\ForEach {$R_i \in R$}
    			\ForEach {$A_j \in R_i$}
    				\State add $A_j$ to $V$
    			\EndFor
    	\EndFor
    	\State //Step-3
	\ForEach {$F_i (X_i \rightarrow Y_i) \in F$}
    		\If { $\left| X_i \right| > 1$ and $X_i \notin V$}
    			\State add $X_i$ to $V$
    		\EndIf
    	\EndFor
    	\State //Step-4
	\ForEach {$R_i \in R$}
    		\If { $R_i \notin V$}
    			\State add $R_i$ to $V$
    		\EndIf
    	\EndFor
    	\State //Step-5
	\State $F^+ \leftarrow$ Closure set of $F$
	\State //Step-6
    	\ForEach {$F_i (X_i \rightarrow Y_i) \in F^+$}
    		\If { $X_i \neq Y_i$ and $X_i \in V$ and $Y_i \in V$}
    			\State add $X_i \rightarrow Y_i$ to $E$
    		\EndIf
    	\EndFor
	\State\textbf{end}
\end{algorithmic}
\end{algorithm}

The steps of the FDG construction algorithm is as follows:

\begin{enumerate}
\item	Decompose all functional dependencies in $F$, such that each functional dependency will have a single element in the right-hand side.
\item	Create an individual vertex for all attributes in schema and add to $V$.
\item	Create vertices for the attribute sets with more than one element, which exist in left-hand side of any functional dependency and does not exist in $V$.
\item	Create additional vertices, which include the attributes of a relation in $R$ and does not exist in $V$.
\item	Generate $F^+$
\item	For each $X \rightarrow Y$ in $F^+$, add an edge to $E$ if $X$ and $Y$ are different vertices in $V$.
\end{enumerate}

The graph in Figure~\ref{ex1} is obtained by the above algorithm for example-1.

\begin{lemma}

The edges of transitive closure of $FDG$ is equal to $F^+$

\end{lemma}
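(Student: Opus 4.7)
The plan is to prove the set equality by a standard double-inclusion argument, exploiting the transitivity axiom for functional dependencies together with the way the FDG construction in Steps 5--6 already uses $F^+$ directly rather than only the input $F$.

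First I would handle the easy direction, $F^+ \subseteq E(\mathrm{TC}(FDG))$. This is essentially already handled by the construction: Step 6 inserts an edge for every dependency $X \to Y \in F^+$ whose two sides are distinct vertices of $V$, so such a dependency is a direct edge of $FDG$ and therefore trivially an edge of its transitive closure.

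Next I would prove the reverse direction, $E(\mathrm{TC}(FDG)) \subseteq F^+$. Given an arbitrary edge $(u, v)$ of the transitive closure, I would unfold it as a directed path $u = v_0 \to v_1 \to \cdots \to v_k = v$ in $FDG$. By Step 6, each single edge $v_i \to v_{i+1}$ was added precisely because $v_i \to v_{i+1}$ is in $F^+$. A short induction on the path length $k$, invoking the transitivity rule of functional-dependency inference at each step (if $X \to Y$ and $Y \to Z$ are in $F^+$ then so is $X \to Z$), then yields $u \to v \in F^+$.

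The main obstacle is not algebraic but definitional: one has to pin down what ``equal to $F^+$'' really means, since $V$ contains only a restricted family of attribute sets, namely single attributes, the relations themselves, and the left-hand sides that appear in the decomposed input $F$ (Steps 2--4). The honest reading of the lemma is therefore that $E(\mathrm{TC}(FDG))$ coincides with those elements of $F^+$ whose two sides are already vertices of $V$, with reflexive trivial dependencies $X \to X$ excluded because Step 6 demands $X \neq Y$. Once this vertex-set caveat is stated explicitly, both inclusions follow immediately from the argument above, and in fact the transitive closure adds no new edges over $FDG$ itself --- a mild but useful sanity check on the construction.
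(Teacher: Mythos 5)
Your proof is correct and rests on the same underlying idea as the paper's own (two-sentence) proof sketch, namely the correspondence between transitivity of reachability in the graph and the transitivity inference rule for functional dependencies; your double-inclusion write-up simply makes that sketch precise. In fact your version is more careful than the paper's: the observation that the equality must be read relative to the vertex set $V$ (and with trivial dependencies $X \to X$ excluded, per the $X \neq Y$ test in Step~6) is a genuine and necessary caveat, since $F^+$ contains dependencies whose left-hand sides are never created as vertices, so the lemma as literally stated would be false without it.
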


\begin{proof}

(SKETCH) It can be easily seen that the transitive definition is same for functional dependencies and its corresponding graph. The equivalency is based on the transitive property on the graphs and functional dependencies.

\end{proof}

\begin{definition}
\textbf{Common Ancestor of an Attribute Set}
 is a vertex in $FDG$, from which there exist simple paths to each element of attribute set.   
\end{definition}

In Figure~\ref{ex1}, $AE$ is one of the Common Ancestors for the set of vertices $\{ F, B \}$.

\begin{definition}
\textbf{Join Chain of an Attribute Set}
\textit{(Denoted as JC hereafter)} is the set of edges of simple paths in $FDG$, from a common ancestor to the attribute set.   

\end{definition}

The attribute sets may be a forbidden set (i.e. Secure Dependent Set) or a required set (which will be defined later). Let the relational schema is as given in Figure~\ref{ex1}. Let the forbidden set is $\{ F, B \}$, and the functional dependency graph is constructed as in Figure~\ref{ex1}. The join chain sets according to the forbidden set is given as below. The first join chain is emphasized with red colour and bold for an ilustrative example.

\vspace{3mm}

{\color{red} $JC_1 = \{ AE \rightarrow E, E \rightarrow F, AE \rightarrow A, A \rightarrow B \}$}

\vspace{3mm}

$JC_2 = \{ HD \rightarrow H, H \rightarrow F, HD \rightarrow D, D \rightarrow B \}$

\vspace{3mm}

$JC_3 = \{ AE \rightarrow E , E \rightarrow H, H \rightarrow F,$ 

$ \quad \quad \quad \; \; AE \rightarrow A, A \rightarrow D, D \rightarrow B \}$

\vspace{3mm}

$JC_4 = \{ AE \rightarrow E, E \rightarrow H, H \rightarrow F, AE \rightarrow A, A \rightarrow B \}$

\vspace{3mm}

$JC_5 = \{ AE \rightarrow E, E \rightarrow F, AE \rightarrow A, A \rightarrow D, D \rightarrow B \}$

\vspace{3mm}

$JC_6 = \{ HD \rightarrow H, H \rightarrow E, E \rightarrow F,$

$ \quad \quad \quad \; \;  HD \rightarrow D, D \rightarrow A, A \rightarrow B \}$

\vspace{3mm}

$JC_7 = \{ HD \rightarrow H, H \rightarrow F, HD \rightarrow D, D \rightarrow A, A \rightarrow B \}$

\vspace{3mm}

$JC_8 = \{ HD \rightarrow H, H \rightarrow E, E \rightarrow F, HD \rightarrow D, D \rightarrow B \}$

\vspace{3mm}

\begin{algorithm}[H]
\caption{Generating Join Chain Set Algorithm}
\label{array-sum}
\begin{algorithmic}[1]
\Require
 \Statex $FDG$: functional dependency graph$( V, E)$ of schema
 \Statex $A$: attribute set
\Ensure
 \Statex $JC$: join chain set
 	\State //Step-1
    \State \textbf{begin}
        \State $JC \leftarrow \{\}$ 
    \ForEach {$(X_i \rightarrow Y_i) \in E$}
    			\State remove $X_i \rightarrow Y_i$ from $E$
    			\State add $Y_i \rightarrow X_i$ to $E$
    	\EndFor
    	\State //Step-2
	\State Initialize $TargetArr$ as array of array of vertices
	\State Initialize $PathArr$ as array of array of set of edges
	\ForEach {$A_i \in A$}
		\State $C_{A_i} \leftarrow$ empty set of connected vertices 
		\State $P_{A_i} \leftarrow$ empty set of path edge sets 
		\State $C_{A_i}, P_{A_i} \leftarrow$ apply $DFS$ to $FDG$ with starting vertex $A_i$
		\State $j \leftarrow 0$
    		\ForEach {$C_j$ in $C_{A_i}$}
    			\State $TargetArr[i][j] \leftarrow C_j$
   			\State $PathArr[i][j] \leftarrow$ simple path to $C_j$ in $P_{A_i}$
   			\State $j \leftarrow j + 1$
    		\EndFor
    	\EndFor
    	\State //Step-3
	\State $SharedArr \leftarrow$ array of shared vertices by all TargetArr rows		
	\ForEach {$S_i \in SharedArr$}
    			\State add $\bigcup(PathArr[k][l] \text{ as }  TargetArr[k][l] = S_i)$ to $JC$
    	\EndFor
    	\State //Step-4
	\ForEach {$JC_i \in JC$}
		\ForEach {$JC_j \in JC$}
			\If { $JC_i \supseteq JC_j$}
    				\State remove $JC_i$ from $JC$
    			\EndIf
		\EndFor
	\EndFor
	\State\textbf{end}
\end{algorithmic}
\end{algorithm}

The steps of the Join Chain Construction algorithm is as follows:

\begin{enumerate}
\item	All edges are reversed.
\item	Taking each element of attribute set ($A$) as starting vertex, apply $DFS$ up to all connected vertices and all possible simple paths are determined for each end vertex.
\item	If there exist simple paths to the same end vertex, which are common (common ancestors in original $FDG$) for all set attributes (assumed to be starting vertices), all combinations of constructed simple paths, starting from different set attribute and ending in the same vertex is a join chain.
\item	If a chain composes another chain, it is discarded.
\end{enumerate}

In example-1, $HD$ and $AE$ are determined as common ancestors and all combinations of simple paths as $AE \rightarrow B$ (2 alternatives) \& $AE \rightarrow F$ (2 alternatives) and $HD \rightarrow B$ (2 alternatives) and $HD \rightarrow F$ (2 alternatives) are given as different join chains.

\begin{definition}
\textbf{Minimum-Cut Secure Decomposition:} Decomposing the relational schema by removing the minimum number of functional dependencies (i.e., not allowing the attributes of the lost functional dependency $A_i \rightarrow A_j$ in the same relation) to satisfy all security requirements.
\end{definition}

Minimized-Cut Secure Decomposition Problem is equivalent to Minimum Hitting Set Problem \cite{minhit} and thus it is NP-Complete. We propose a simple greedy heuristic algorithm to solve Relaxed-Cut Secure Decomposition problem (which is defined below), and due to the structure of our problem, we observed that this greedy approach mostly determines the optimal solution.

\begin{definition}
\textbf{Relaxed-Cut Secure Decomposition:} Decomposing the relational schema by greedly removing the functional dependencies in ordet to cut the dependencies of secure dependent attributes at least through one of the join chains.     
\end{definition}

Unlike strong cut which cuts the identifiers of all attributes of the secure dependent sets, relaxed cut aims to remove the functional dependencies as little as possible. The steps of the Relaxed-Cut Secure Decomposition algorithm are as follows:

\begin{enumerate}
\item	Calculate all join chains( $JC_i$ ) for each security dependent set (Algorithm-2).
\item	For each edge in the $FDG$, determine the number of times (SecurityCount) it appears in join chains.
\item	Sort the edges first according to their SecurityCount in descending order, then, the number of attributes on the nodes at both sides of the edge, in ascending order (in order to cut lower chains first).
\item	Traverse the sorted list and mark each join chain as cut, if the edge is contained. These edges are selected ones and to be a selected one, an edge should be an element of at least one unmarked join chain. Set of selected edges are named as new security dependent sets.
\item	All subsets of the attributes of the relational schema are generated, which is called as $PSR_x$ in the algorithm. Then, for each new security dependency set, each element of $PSR_x$ is processed. The element set is eliminated if it contains all attributes of that security dependency set together.
\item	After that, among the remaining subsets redundant ones (used for unnecessary sub-relations composed by other sub-relations) are also eliminated.
\end{enumerate}

The steps 1 through 4 can be named as “Relaxation Stage” and steps 5 and 6 as “Decomposition Stage”. Decomposition stage is a subpart of the secure decomposition algorithm proposed in \cite{r12} except the identifier elimination stage. 

\vspace{3mm}

\begin{algorithm}[H]
\caption{Relaxed-Cut Secure Decomposition Algorithm}
\label{array-sum}
\begin{algorithmic}[1]
\Require
 \Statex $LS$: logical schema as $( R, F )$
 \Statex $SD$: set of security dependent sets for $LS$
\Ensure
 \Statex $PS_R$: a subset of maximal subsets of $R$ satisfying security decomposition constraints

    \State \textbf{begin}
     \State $JC \leftarrow$ empty array of join chain sets 
     \State //Step-1
    \ForEach {$SD_i \in SD$}
    			\State $JC_i \leftarrow$  join chain set for $SD_i$ (Algorithm-2)
    			\State add $JC_i$ to $JC$
    	\EndFor
    	\State $FDG ( V, E) \leftarrow$ FDG of $LS$ (Algorithm-1) 
	\State $SecurityCount \leftarrow$ array of integers initialized to 0, size $\left|E\right|$
	\State //Step-2
	\ForEach {$E_i \in E$}
		\ForEach {$JC_k \in JC$}
			\ForEach {$JC_{k_i} \in JC_k$}
				\If { $E_i \in JC_{k_i}$}
    					\State $SecurityCount[i]++$
    				\EndIf
    			\EndFor
    		\EndFor
    	\EndFor
	\State //Step-3
    	\State $E \leftarrow$ sort edges in descending order (uses $SecurityCount$)
	\State $Selection \leftarrow$ empty set of edges
	\ForEach {$E_i \in E$} //Step-4
		\ForEach {$JC_k \in JC$}
			\ForEach {$JC_{k_i} \in JC_k$}
				\If { $E_i \in JC_{k_i}$ and $JC_{k_i}$ is unmarked}
    					\State mark $JC_{k_i}$
    					\State add $E$ to $Selection$
    				\EndIf
    			\EndFor
    		\EndFor
    	\EndFor
	\State $SD_{new} \leftarrow$ empty set of attribute sets
	\ForEach {$(X_i \rightarrow Y_i) \in Selection$}
    			\State add $(X_i \cup Y_i)$ to $SD_{new}$
    	\EndFor
	\ForEach {$R_x \in R$} //Step-5
		\State $PSR_x \leftarrow$ Power Set of $R_x$
		\ForEach {$D_i \in SD_new$}
			\ForEach {$S_j \in PSR_x$}
				\If { $D_i \subseteq S_j$ }
    					\State remove $S_j$ from $PSR_x$
    				\EndIf
    			\EndFor
    		\EndFor
    		\ForEach {$SS_i \in PSR_x$} //Step-6
			\ForEach {$S_j \in SS_i$}
				\If { $S_j \subseteq SS_i$ }
    					\State remove $S_j$ from $PSR_x$
    				\EndIf
    			\EndFor
    		\EndFor
    		
    	\EndFor
	\State\textbf{end}
\end{algorithmic}
\end{algorithm}

\begin{figure*}[!tb]
\centering
\begin{tikzpicture}[
            > = stealth, 
            shorten > = 1pt, 
            auto,
            node distance = 2.5cm, 
            semithick 
        ]

        \tikzstyle{every state}=[
            draw = black,
            thick,
            fill = white,
            minimum size = 4mm
        ]

			\node[state] (MHRJ)  at (-0.4,-0.8) {$M H R J$};
        \node[state] (H) at (-1.6,-4.8){$H$};
        \node[state] (R) at (0.6,-4.8) {$R$};
			\node[state] (J) at (-2.4,-2.4) {$J$};
			\node[state] (M) at (2.4,-2.8) {$M$};

        \path[->] (MHRJ) edge node [pos=0.3, fill=white, scale=0.9, anchor=center]{$e_{22}$} (J);
			\path[->] (MHRJ) edge node [pos=0.2, fill=white, scale=0.9, anchor=center]{$e_{19}$} (H);
			\path[->] (MHRJ) edge node [pos=0.2, fill=white, scale=0.9, anchor=center]{$e_{21}$} (R);
			\path[->] (MHRJ) edge node [pos=0.3, fill=white, scale=0.9, anchor=center]{$e_{20}$} (M);
			\path[->] (M) edge node [pos=0.3, fill=white, scale=0.9, anchor=center]{$e_{23}$} (H);
			\path[->] (M) edge node [pos=0.3, fill=white, scale=0.9, anchor=center]{$e_{24}$} (R);
			\path[->] (M) edge node [pos=0.3, fill=white, scale=0.9, anchor=center]{$e_{25}$} (J);

			\node[state] (K) at (-4.4,-3.4) {$K$};
			\node[state] (L) at (-3.0,-3.8) {$L$};
			\node[state] (JKL)  at (-4.4,-0.8) {$J K L$};

			\path[->] (JKL) edge node [pos=0.3, fill=white, scale=0.9, anchor=center]{$e_{26}$} (J);
			\path[->] (JKL) edge node [pos=0.3, fill=white, scale=0.9, anchor=center]{$e_{27}$} (L);
			\path[->] (JKL) edge node [pos=0.3, fill=white, scale=0.9, anchor=center]{$e_{28}$} (K);
			\path[->] (J) edge node [pos=0.3, fill=white, scale=0.9, anchor=center]{$e_{29}$} (K);
			\path[->] (J) edge node [pos=0.3, fill=white, scale=0.9, anchor=center]{$e_{30}$} (L);

			\node[state] (ABCD)  at (7.6,-0.8) {$A B C D$};
        \node[state] (B) at (9.4,-2.0) {$B$};
        \node[state] (C) at (10,-4) {$C$};
			\node[state] (A) at (7.1,-4) {$A$};
			\node[state] (D) at (8,-6) {$D$};

        \path[->] (ABCD) edge node [pos=0.3, fill=white, scale=0.9, anchor=center]{$e_4$} (A);
			\path[->] (ABCD) edge node [pos=0.4, fill=white, scale=0.9, anchor=center]{$e_6$} (B);
			\path[->] (ABCD) edge node [pos=0.2, fill=white, scale=0.9, anchor=center]{$e_7$} (C);
			\path[->] (ABCD) edge node [pos=0.4, fill=white, scale=0.9, anchor=center]{$e_5$} (D);
			\path[->] (A) edge[bend right] node [pos=0.3, fill=white, scale=0.9, anchor=center]{$e_{10}$} (B);
			\path[->] (A) edge node [pos=0.4, fill=white, scale=0.9, anchor=center]{$e_{11}$} (C);
			\path[->] (A) edge node [pos=0.3, fill=white, scale=0.9, anchor=center]{$e_8$} (D);
			\path[->] (B) edge node [pos=0.8, fill=white, scale=0.9, anchor=center]{$e_{12}$} (D);
			\path[->] (B) edge node [pos=0.3, fill=white, scale=0.9, anchor=center]{$e_{13}$} (C);
			\path[->] (B) edge[bend right] node [pos=0.4, fill=white, scale=0.9, anchor=center]{$e_9$} (A);

			\node[state] (AEM)  at (3.6,-0.8) {$A E M$};
			\node[state] (EFG)  at (4.2,-5.8) {$E F G$};
        \node[state] (F) at (3.3,-4.2){$F$};
        \node[state] (G) at (5.3,-4.2) {$G$};
			\node[state] (E) at (4.3,-2.6) {$E$};

        \path[->] (EFG) edge node [pos=0.3, fill=white, scale=0.9, anchor=center]{$e_{17}$} (E);
        \path[->] (EFG) edge node [pos=0.4, fill=white, scale=0.9, anchor=center]{$e_{16}$} (F);
        \path[->] (EFG) edge node [pos=0.4, fill=white, scale=0.9, anchor=center]{$e_{18}$} (G);
        \path[->] (AEM) edge node [pos=0.4, fill=white, scale=0.9, anchor=center]{$e_3$} (M);
        \path[->] (AEM) edge node [pos=0.4, fill=white, scale=0.9, anchor=center]{$e_1$} (A);
        \path[->] (AEM) edge node [pos=0.4, fill=white, scale=0.9, anchor=center]{$e_2$} (E);
			\path[->] (E) edge node [pos=0.5, fill=white, scale=0.9, anchor=center]{$e_{14}$} (F);
			\path[->] (E) edge node [pos=0.5, fill=white, scale=0.9, anchor=center]{$e_{15}$} (G);

    \end{tikzpicture}
    \caption{$FDG$ of Example-3}
\label{ex3}
\end{figure*}
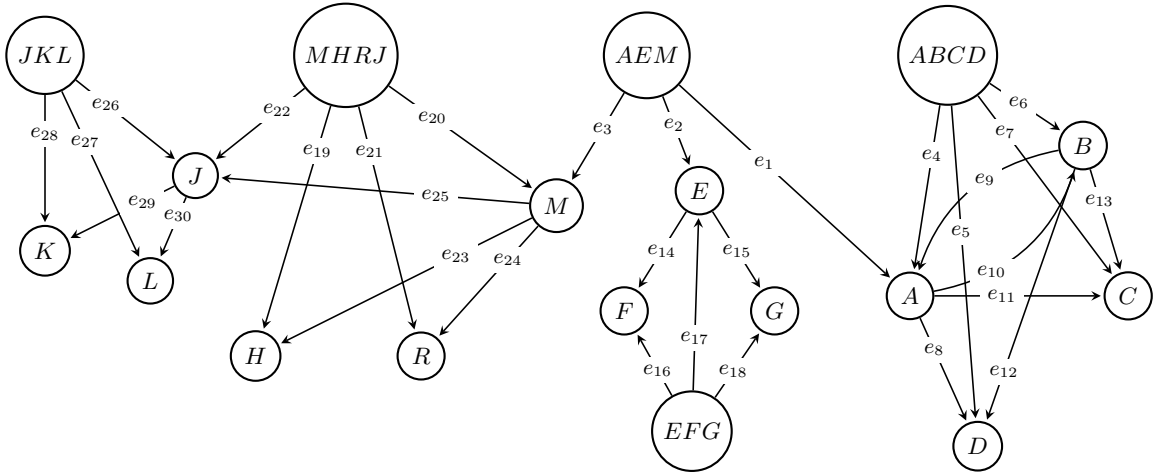

\vspace{3mm}

Example-2: Consider the following schema and the forbidden sets.

\vspace{3mm}

$R_1 = \{ \underline{A}, \underline{B}, C, D \}$ 

\vspace{3mm}

$R_2 = \{ \underline{E}, F, G \}$

\vspace{3mm}

$R_3 = \{ \underline{A, E, M} \}$

\vspace{3mm}

$R_4 = \{ \underline{J}, K, L \}$

\vspace{3mm}

$R_5 = \{ \underline{M}, H, R, J \}$

\vspace{3mm}

Forbidden Sets =  $\{ A, D \}, \{ D, F \}, \{ K, H \}$

\vspace{3mm}

Functional dependency graph is constructed as in Figure~\ref{ex3}.

Join chains are determine for this example as follows:

\vspace{3mm}

For $\{ A, D \}$:

\vspace{3mm}

\hspace*{3mm} $JC_1 = \{ e_8 \}$ \hspace{3mm} $JC_2 = \{ e_{10}, e_{12} \}$ 

\vspace{3mm}

\hspace*{3mm} $JC_3 = \{ e_4, e_5 \}$ \hspace{3mm} $JC_4 = \{ e_6, e_{12}, e_4 \}$ 

\vspace{3mm}

\hspace*{3mm} $JC_5 = \{ e_5, e_6, e_9 \}$ \hspace{3mm} $JC_6 = \{ e_9, e_{12} \}$

\vspace{3mm}

For $\{ D, F \}$:

\vspace{3mm}

\hspace*{3mm} $JC_7 = \{ e_1, e_2, e_8, e_{14} \}$ \hspace{3mm} $JC_8 = \{ e_1, e_2, e_{10}, e_{12}, e_{14} \}$

\vspace{3mm}

For $\{ K, H \}$:

\vspace{3mm}

\hspace*{3mm} $JC_9 = \{ e_{23}, e_{25}, e_{29} \}$ \hspace{3mm} $JC_{10} = \{ e_{19}, e_{22}, e_{29} \}$ 

\vspace{3mm}

\hspace*{3mm} $JC_{11} = \{ e_{20}, e_{22}, e_{23}, e_{29} \}$


\vspace{3mm}

Finally, relaxed cut secure dependency algorithm is executed using Security Counts. The edges are shown up to all marked ones in Table-1. Plus (+) sign in a row indicates that this edge is selected and the join chain on the column is marked. Minus (-) sign is used for already marked join chains and the edges without plus (+) sign in the row is not selected.

As a result, the following edges are selected: 

\vspace{2mm}

$\{ B \rightarrow D, A \rightarrow D, J \rightarrow K, B \rightarrow A, ABCD \rightarrow A \}$

\vspace{3mm}

Output of Algorithm in \cite{r12} with the Security Dependent Set $\{ A, D \}, \{ D, F \}, \{ K, H \}$ would be as (i.e. with strong-cut): 

\vspace{3mm}

\hspace*{3mm} $R_{1_1} = \{ A, C \}$ \hspace{3mm} $R_{1_2} = \{ B, C \}$ \hspace{3mm} $R_{1_3} = \{ C, D \}$ 

\vspace{3mm}

\hspace*{3mm} $R_{2_1} = \{ E, G \}$ \hspace{3mm} $R_{2_2} = \{ F, G \}$

\vspace{3mm}

\hspace*{3mm} $R_3 = \{ A, E, M \} $

\vspace{3mm}

\hspace*{3mm} $R_{4_1} = \{ J, L \}$ \hspace{4.4mm} $R_{4_2} = \{ K, L \}$

\vspace{3mm}

\hspace*{3mm} $R_{5_1} = \{ M, R, J \}$  $R_{5_2} = \{ J, R, H \}$

\vspace{3mm}

On the other hand the output of Algorithm-3 with Security Dependent Sets 
$\{ B, D \}, \{ A, D \}, \{ J, K \}, \{ A, B, C, D \}$ will be as follows:

\vspace{3mm}

\hspace*{3mm} $R_{11} = \{ A, C \}$ \hspace{3mm} $R_{12} = \{ B, C \}$ \hspace{3mm} $R_{13} = \{ C, D \}$ 

\vspace{3mm}

\hspace*{3mm} $R_{2} = \{ E, F, G \}$ 

\vspace{3mm}

\hspace*{3mm} $R_3 = \{ A, E, M \} $

\vspace{3mm}

\hspace*{3mm} $R_{41} = \{ J, L \}$ \hspace{3mm} $R_{42} = \{ K, L \}$ 

\vspace{3mm}

\hspace*{3mm} $R_{5} = \{ M, H, R, J \}$

\vspace{3mm}

The algorithm can be improved by defining a total participation count to all edges for all possible join chains of combination of attributes but it will result in a high time-cost.  


\begin{theorem}
Algorithm-3 generates a secure logical schema.
\end{theorem}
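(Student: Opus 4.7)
The plan is to reduce the claim to two complementary facts about Algorithm-3 and then glue them together using Lemma 1. The first fact is that the Relaxation Stage (steps 1--4) selects a set of edges in the $FDG$ that hits every join chain of every security dependent set. The second fact is that the Decomposition Stage (steps 5--6) breaks exactly the functional dependencies corresponding to those selected edges. With both facts in hand, any hypothetical chain of meaningful joins in the output that re-associates the attributes of some forbidden set can be projected onto a join chain in the original $FDG$, and that chain must then contain a broken edge --- a contradiction.

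Concretely, I would first fix an arbitrary security dependent set $SD_i \in SD$ and assume, for contradiction, that the decomposed schema admits a sequence of meaningful (primary-to-foreign-key) equi-joins whose result tuple associates every attribute of $SD_i$. The key observation is that, via Lemma 1 (the transitive closure of $FDG$ equals $F^+$), such a sequence corresponds to a family of simple paths from a single common ancestor vertex to each attribute of $SD_i$; by the definition of join chain this is precisely some $JC_{i,j}$ of $SD_i$, every edge of which must still be realizable in the decomposed schema.

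Next I would verify that step 4 of Algorithm-3 marks every join chain, so that the produced set $Selection$ is a hitting set for the full family of join chains: since each join chain is non-empty, the first traversal reaching one of its edges triggers the marking condition and inserts that edge into $Selection$. Consequently, the derived family $SD_{new}$ contains $X \cup Y$ for each selected edge $X \to Y$, and the Decomposition Stage removes, in step 5, every element of the power set that is a superset of any $D \in SD_{new}$; step 6 only trims further redundancies without reintroducing such a subset. Hence, for every selected edge, no output sub-relation contains both of its endpoints together, so the corresponding functional dependency cannot participate in any chain realized inside the output schema. Applied to the selected edge that hits the surviving $JC_{i,j}$, this contradicts the assumption made at the start and establishes the theorem.

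The step I expect to be the main obstacle is the passage from \emph{there exists a meaningful join sequence in the output schema that associates $SD_i$} to \emph{there exists a corresponding join chain in the original $FDG$ whose edges are all still realizable}. This requires carefully arguing that every identifier used by an equi-join in the output schema was already an identifier in the original schema --- no new functional dependencies arise from vertical decomposition --- and that probabilistic dependencies, which were folded into the list of security dependent sets in the preliminaries, are already accounted for in the join-chain computation, so that no hidden associating path escapes the hitting-set argument.
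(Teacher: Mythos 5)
Your proposal is correct and follows essentially the same route as the paper's own proof: assume for contradiction that a sequence of meaningful joins in the decomposed schema associates a forbidden set, map it to a join chain in the original $FDG$, and observe that every join chain was hit by a selected edge whose endpoints can no longer coexist in any output sub-relation. Your write-up is merely more explicit than the paper's --- in particular about the greedy traversal yielding a hitting set and about the (genuinely nontrivial, and glossed over by the paper) step that no new identifiers arise from the vertical decomposition --- but the underlying argument is the same.
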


\begin{proof}
Assume that the resulting decomposed relations can be joined by foreign keys to associate securely dependent attributes. Then the functional dependency graph of new schema should contain a join chain for the attributes of this security dependent set. However, this cannot happen since each join chain has been cut at least by an edge and the attributes of these cut edges are given as new security dependent sets, which means their coexistence is prevented.

Therefore, resulting decomposed relations form a secure logical schema and the new forbidden sets, serve for the same privacy degree with respect to secure decomposition in \cite{r12} by using original forbidden sets.

\end{proof}

\section{Secure Decomposition with Required Attribute Sets}

Considering the frontend applicational usage of relational databases, all roles and permitted functionalities are predetermined on the requirement analysis and design stages of the project. Thus, each functionality may be represented as a set of database queries and each query can be shown as an attribute set, which should be associated to accomplish the task. However, these required sets should be checked against the security dependent sets as a verification step. If any inconsistency is determined, the design of queries or security policy should be reviewed. In addition to these, the decomposition alternatives (according to the edge selection strategy in Algorithm-3) can be quantified and chosen according to the given association sets, which results in a decomposition which satisfies both security dependencies and needed functionalities. 

First, we define required attribute set:

\begin{definition}
\textbf{Required Attribute Set}
\textit{(Denoted as $RS$ hereafter)} is a set of attributes in the relational schema, which should be associated with a series of meaningful joins to satisfy a functionality of the applicational usage.   

\end{definition}

It is important to note that, each functionality of a user role should be mapped to a set of $RS$.

Required and forbidden sets must be consistent, and the decompostion algorithm must satisfy both requirements.

\begin{definition}
\textbf{Consistency Check Between Required Sets and Forbidden Sets:}
Required and Forbidden Sets are consistent with each other if there is a "cut set" that contains at least one element from each one of the join chains for each forbidden set (each forbidden set forms a set of join chains) and there is at least one set in join chains corresponding to each required set (each required set forms a set of join chains) that do not contain any element from the cut set.
\end{definition}

In the case of any inconsistency discovered, security policy and association set needs should be revised by the designer.

Consistency check (CC) problem can be simplified as follows: the edges are mapped to letters, and thus join chains become set of letters. The consistency check problem can be defined as to determine a set of letters (cut set) such that at least one letter from each set of the forbidden sets and none of the letters in at least one of the sets of each required set must be in the cut set.

Consider the following CC problem instance:

\vspace{4mm}

$\text{Join Chains}_{FS} = \left\lbrace 
\begin{aligned}
\{e, f, g\}  \hspace{0.1cm} \\
\{a, b, c, d\}  \\
\{a, e, c, d\}   \\
\{b, f, g\} \hspace{0.1cm}
\end{aligned} \right\rbrace$

\vspace{3mm}

$\text{Join Chains}_{RS} = \left\lbrace 
\begin{aligned}
	\left\lbrace 
	\begin{aligned}
		\{a, b, f, g\}  \\
		\{b, c, f\} \hspace{0.2cm}
	\end{aligned} 
	\right\rbrace \\
\left\lbrace 
	\begin{aligned}
		\{d, e\}  \\
		\{c, g\}
	\end{aligned} 
	\right\rbrace \hspace{0.3cm}
\end{aligned} \right\rbrace$

\vspace{3mm}

$\mathcal{\text{Cut-Set}} = \left\lbrace a, g \right\rbrace$ 

$\mathcal{\text{Preserved Required Words}} = \left\lbrace \{d, e\}, \{b, c, f\} \right\rbrace$

\vspace{2mm}

The above CC instance is consistents since at least one element of the cut set \{a, g\}  is in each forbidden set, and one set for each set of required sets (as \{d, e\} and \{b, c, f\}) do not contain any element of the cut set.

On the other hand, the following CC instance is inconsistent since no cut set satisfying the requirements exists.

\vspace{4mm}

$\text{Join Chains}_{FS} = \left\lbrace 
\begin{aligned}
\{e, f, g\}  \hspace{0.1cm} \\
\{a, b, c, d\}  \\
\{a, e, c, d\}   \\
\{b, f, g\} \hspace{0.1cm}
\end{aligned} \right\rbrace$ 

\vspace{3mm}

$\text{Join Chains}_{RS} = \left\lbrace 
\begin{aligned}
	\left\lbrace 
	\begin{aligned}
		\{a, b, f, g\}  \\
		\{b, f\} \hspace{0.3cm}
	\end{aligned} 
	\right\rbrace \\
\left\lbrace 
	\begin{aligned}
		\{d, g\}  \\
		\{c, g\}
	\end{aligned} 
	\right\rbrace \hspace{0.3cm}
\end{aligned} \right\rbrace$ 

\vspace{3mm}

$\mathcal{\text{Found}} = \text{Inconsistency}$ \hspace{0.5cm}

\vspace{3mm}
\begin{theorem}
Consistency Check ($CC$) Problem is NP-Complete.
\end{theorem}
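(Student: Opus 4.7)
The plan is to prove NP-completeness in the standard two steps: show that CC is in NP and exhibit a polynomial-time reduction from 3-SAT. Membership in NP is immediate: a candidate cut set is a polynomial-size certificate, and verification reduces in linear time to checking that every forbidden set has non-empty intersection with the cut and that every required-set group contains at least one of its sets disjoint from the cut.

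For the hardness reduction, take an arbitrary 3-SAT instance over variables $x_1, \ldots, x_n$ with clauses $C_1, \ldots, C_m$. I introduce two fresh letters $T_i$ and $F_i$ per variable, intended to encode the truth value of $x_i$. The variable gadget consists of simultaneously adding a forbidden set $\{T_i, F_i\}$ and a required-set group $\{\{T_i\}, \{F_i\}\}$: the forbidden set forces at least one of the two letters into any valid cut, while the required group forces at least one out, so together they pin any valid cut to contain exactly one of $T_i$, $F_i$. A valid cut $C$ then defines a truth assignment via $x_i = \text{true}$ iff $T_i \in C$. Each clause $C_j = (\ell_{j,1} \vee \ell_{j,2} \vee \ell_{j,3})$ is encoded as one additional forbidden set $\{m_{j,1}, m_{j,2}, m_{j,3}\}$, where $m_{j,k} = T_a$ if $\ell_{j,k}$ is the positive literal $x_a$ and $m_{j,k} = F_a$ if $\ell_{j,k}$ is the negative literal $\neg x_a$; hitting this set is equivalent to the induced assignment making at least one literal of $C_j$ true.

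Correctness runs in both directions. A satisfying assignment yields the cut $C = \{T_i : x_i = \text{true}\} \cup \{F_i : x_i = \text{false}\}$, which by construction satisfies every variable gadget and hits every clause forbidden set. Conversely, any valid cut must, by the variable gadgets, have exactly this form for some assignment, and then each clause forbidden set being hit translates back into clause satisfaction. The reduction is clearly polynomial in the size of the 3-SAT instance, since each variable contributes one forbidden set and one required group and each clause contributes one forbidden set.

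The main obstacle is that CC does not natively express an \emph{exactly-one} constraint, yet a coherent truth assignment requires one. The paired variable gadget, combining a two-element forbidden set with a matching required-set group of singletons on the same two letters, is precisely what resolves this. Once that gadget is in place the rest of the argument is routine; in particular, because the letters $T_i, F_i$ appear only in their own variable gadget and in the clause forbidden sets, there is no spurious interaction that could let an inconsistent cut slip through.
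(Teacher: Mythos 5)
Your proof is correct, but your reduction is the mirror image of the one in the paper, so a short comparison is worthwhile. The paper also reduces from 3-SAT, but it encodes cut membership as \emph{falsity}: each variable $q_i$ contributes only a forbidden set $\{q_i,\neg q_i\}$ (forcing at least one literal of each complementary pair into the cut), and each clause becomes a \emph{required} group of literal singletons $\{\{p_{j_1}\},\{p_{j_2}\},\{p_{j_3}\}\}$, so that "some singleton disjoint from the cut" means "some literal is true." Notably the paper gets away without any exactly-one gadget: if both $q_i$ and $\neg q_i$ land in the cut, the assignment $q_i := (q_i\notin C)$ is still well defined and the argument goes through. You instead encode cut membership as \emph{truth}, which forces you to put the clauses on the forbidden side (a clause triple must be hit) and therefore obliges you to add the paired required group $\{\{T_i\},\{F_i\}\}$ per variable -- otherwise the all-in cut trivially hits every forbidden set, as you correctly observe. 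Your gadget is sound and the two directions check out; the cost is one extra required group per variable, while the benefit is that your write-up, unlike the paper's, explicitly establishes membership in NP (the cut set as a polynomially verifiable certificate), which the paper's proof silently omits even though it is needed for the completeness claim. Both reductions are clearly polynomial and both exploit the same underlying asymmetry of CC (forbidden sets must be hit, required groups must have an unhit member); they just assign the roles of variables and clauses to opposite sides.
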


\begin{proof}

Input: Set of sets $A$ and set of sets of sets $B$ as follows:

\vspace{3mm}

$\mathcal{A} = \left\lbrace \left\lbrace a_{1_1}, a_{1_2}, ...\right\rbrace, \left\lbrace a_{2_1}, ...\right\rbrace, ...\right\rbrace$   

\vspace{2mm}

$\mathcal{B} = \left\lbrace \left\lbrace \left\lbrace b_{1_{1_1}}, b_{1_{1_2}}, ...\right\rbrace, \left\lbrace b_{1_{2_1}}, ...\right\rbrace, ...\right\rbrace, \left\lbrace \left\lbrace b_{2_{1_1}}, ...\right\rbrace, ...\right\rbrace, ...\right\rbrace$

\vspace{2mm}

Problem: Given $A$ and $B$ as above determine (i.e. it is consistent), if there is a set of $a_{i_j}$ for each $i$ in $A$, there is at least one $r$ for each $k$ in $B$ such that $b_{k_r}$ does not include any of $a_{i_j}$.

\vspace{2mm}

More formally;

\vspace{1mm}

Given $(A,B)$ sets, the system is consistent

\vspace{1mm}

if there exists $C = \{ a_{i_{j}} | \forall i \exists j (a_{i_{j}} \in A) \}$ 

\vspace{1mm}

such that $\forall k \exists r (b_{k_r} \in B \wedge b_{k_r} \cap C = \emptyset)$

\vspace{2mm}

NPC Proof: Given 3SAT instance constuct an instance of $CC$ as follows:

\vspace{2mm}

3SAT instance: $INS_{3SAT}=(p_{1_1} \lor p_{1_2} \lor p_{1_3}) \land (p_{2_1} \lor p_{2_2} \lor p_{2_3}) \land ... (p_{s_1} ..)$ such that each $p_{i_j}$ is either $q_k$ or $\neg q_k$ and there are exactly $k$ different propositional variables {$q_1$,...$q_k$}

\vspace{2mm}

Construct instance $INS_{CC}$ as follows:

\vspace{2mm}

$\mathcal{\text{Generate A}} = \left\lbrace \left\lbrace q_1, \neg q_1 \right\rbrace, ..., \left\lbrace q_k, \neg q_k \right\rbrace \right\rbrace$

\vspace{2mm}

$\mathcal{\text{Generate B}} = \left\lbrace \left\lbrace \left\lbrace p_{1_1} \right\rbrace, \left\lbrace p_{1_2} \right\rbrace, \left\lbrace p_{1_3} \right\rbrace \right\rbrace, ...\right\rbrace$

\vspace{3mm}

$INS_{3SAT}$ is satisfiable if and only if $INS_{CC}$ is consistent.

$INS_{3SAT}$ is satisfiable if there is a truth assignment for each literal to make all clauses as true. This is equivalent to $INS_{CC}$, such that for each literal one element from each $\{q_i,  \neg q_i \}$ is selected to be in set $C$ (that is equivalent to false in $INS_{3SAT}$ or an edge to be cut in original consistency check problem). If each clause in satisfiable in $INS_{3SAT}$, then, at least 1 literal of $p_{i_1}$ or $p_{i_2}$ or $p_{i_3}$ must be true. That means either $\{p_{i_1}\} \cap C = \emptyset$ or $\{p_{i_2}\} \cap C = \emptyset$ or $\{p_{i_3}\} \cap C = \emptyset$.

\end{proof}

The following algorithm checks for inconsistency for given required sets $RS$ and forbidden sets $SD$. The initial iterations are the same to find a suitable decomposition, if exists. 

The steps of the Algorithm-4 is as follows:

\begin{enumerate}
\item	Calculate all join chains($JC_x$) for each security dependent set $SD$ (Algorithm-2). 
\item	Calculate all join chains($JC_y$) for each required attribute set $RS$ (Algorithm-2).
\item	For each different edge combination set, in which each single edge is chosen from a single join chain in all $JC_y$’s, check if these selected edges are unbroken while breaking all forbidden sets's join chains, then there is a disjoint edge set which breaks all join chains in $JC_x$. If such an edge combination cannot be found then inconsistency exists.
\end{enumerate}

The process will continue with decomposition stage of Algorithm-3 with $CutSet$ output to find a secure decomposition.

\begin{algorithm}[H]
\caption{Consistency Check and Determining Cut Set Algorithm}
\label{array-sum}
\begin{algorithmic}[1]
\Require
 \Statex $RS$: set of required sets for logical schema 
 \Statex $SD$: set of security dependent sets for logical schema
\Ensure
 \Statex $Inconsistent$: true or false
 \Statex $CutSet$: possible forbidden sets for decomposition
    \State \textbf{begin}
    \State $JC_x \leftarrow$ set of join chain sets
	\State $JC_y \leftarrow$ array of set of join chain sets
	\ForEach {$SD_i \in SD$} //Step-1
    			\State $Frb_i \leftarrow$  join chain sets for $SD_i$ (Algorithm-2)
    			\State add $Frb_i$ to $JC_x$
    	\EndFor
    	\State $j \leftarrow 0$
	\ForEach {$RS_i \in RS$} //Step-2
    			\State $Req_i \leftarrow$  join chain sets for $AS_i$ (Algorithm-2)
    			\State add $Req_i$ to $JC_{y[j]}$
    			\State $j \leftarrow j + 1$
    	\EndFor
	\ForEach {possible edge set $CutSet$, as one edge is selected from an 
	\State element of all $JC_y$ members} //Step-3
    			\State $found \leftarrow$  true
    			\ForEach {$JC_{x_k} \in JC_x$}
    				\If { $JC_{x_k} \cap CutSet \neq \emptyset$}
    					\State $found \leftarrow$  false
    					\State break
    				\EndIf
    			\EndFor
    			\If { $found$ }
    					\State break
    				\EndIf
    	\EndFor
    
	\State\textbf{end}
\end{algorithmic}
\end{algorithm}

\vspace{3mm}

{\renewcommand{\arraystretch}{2}%
\begin{table*}
\centering
    \caption{Greedy Edge Selection Phase} \begin{small}
    \begin{tabular}{|c|c|c|c|c|c|c|c|c|c|c|c|c|c|}
    \hline
     {\bfseries EDGE } & {\bfseries EDGE NUMBER } & {\bfseries SecurityCount} & {\bfseries $JC_1$} & {\bfseries $JC_2$} 
     & {\bfseries $JC_3$ } & {\bfseries $JC_4$} & {\bfseries $JC_5$} & {\bfseries $JC_6$} 
     & {\bfseries $JC_7$ } & {\bfseries $JC_8$} & {\bfseries $JC_9$} & {\bfseries $JC_{10}$}
     & {\bfseries $JC_{11}$ }
     \\
     
    \hline
    \textbf{$B \rightarrow D$}  & \textbf{$e_{12}$}  & \textbf{4} &  & +  &  & + &  & + & & + & & & \\
    \hline
    \textbf{$J \rightarrow K$}  & \textbf{$e_{29}$}  & \textbf{3} &  &  &  & &  & & & & + & + & + \\   
    \hline
    \textbf{$A \rightarrow D$}  & \textbf{$e_8$}  & \textbf{2} & + &  &  & &  & & + & & & & \\   
    \hline
    \textbf{$B \rightarrow A$}  & \textbf{$e_9$}  & \textbf{2} & &  &  & & + & - & & & & & \\   
    \hline
    Not Selected & $e_{10}$  & 2 & & - &  & &  & & & - & & & \\   
    \hline
    Not Selected & $e_{14}$  & 2 & &  &  & &  & & - & - & & & \\   
    \hline
    Not Selected & $e_{23}$  & 2 &  &  &  & &  & & & & - & & -\\   
    \hline
    Not Selected & $e_1$  & 2 & &  &  & &  & & - & - & & & \\   
    \hline
    Not Selected & $e_2$  & 2 & &  &  & &  & & - & - & & & \\   
    \hline
    \textbf{$ABCD \rightarrow A$}  & \textbf{$e_4$}  & \textbf{2} &  &  & + & - &  & & & & & & \\   
    \hline
    \end{tabular}
    \end{small} 
\end{table*}}

{\renewcommand{\arraystretch}{2}%
\begin{table*}[!h]
\centering
    \caption{Timings for Implementation Strategy-I} \begin{small}
    \begin{tabular}{|c|c|c|c|c|c|c|c|c|c|c|}
    \hline
     {\bfseries Criteria } & {\bfseries $Exp_1$} & {\bfseries $Exp_2$} & {\bfseries $Exp_3$} 
     & {\bfseries $Exp_4$ } & {\bfseries $Exp_5$} & {\bfseries $Exp_6$} & {\bfseries $Exp_7$} 
     & {\bfseries $Exp_8$ } & {\bfseries $Exp_9$} & {\bfseries $Exp_{10}$}  
     \\
     
    \hline
    \multirow{1}{*}{\parbox{4cm}{\centering \textbf{\#edges in FDG}}}  & 1000 & 1000  & 1000  & 1000 & 1000 & 1000 & 1000 & 1000 & 1000 & 1000 \\
    \hline
    \multirow{1}{*}{\parbox{4cm}{\centering \textbf{\#edges in Forbidden Join Chains}}}  & 10 & 10 & 10 & 10 & 10 & 10 & 20 & 30 & 40 & 50  \\   
    \hline
    \multirow{1}{*}{\parbox{4cm}{\centering \textbf{\#Forbidden Join Chains}}}   & 20 & 40 & 60 & 80 & 100 & 100 & 100 & 100 & 100 & 100 \\   
    \hline
    \multirow{1}{*}{\parbox{4cm}{\centering \textbf{\#Required Attribute Sets}}}   & 50 & 50 & 50 & 50 & 50 & 50 & 50 & 50 & 50 & 50 \\   
    \hline
    \multirow{1}{*}{\parbox{4cm}{\centering \textbf{\#Join Chains per Required Attribute Set} }}  & 10 & 10 & 10 & 10 & 10 & 10 & 10 & 10 & 10 & 10 \\   
    \hline
    \multirow{1}{*}{\parbox{4cm}{\centering \textbf{\#edge per Join Chain for Required Attribute Set}}} & 10 & 10 & 10 & 10 & 10 & 10 & 10 & 10 & 10 & 10 \\   
    \hline
    \multirow{1}{*}{\parbox{4cm}{\centering \textbf{\#Duration}}}  & 2 ms & 2 ms & 3 ms & 3 ms & 4 ms & 3 ms & 3 ms & 4 ms & 4 ms & 6 ms \\   
    \hline
    \end{tabular}
    \end{small} 
\end{table*}}

{\renewcommand{\arraystretch}{2}%
\begin{table*}[!h]
\centering
    \caption{Timings for Implementation Strategy-II} \begin{small}
    \begin{tabular}{|c|c|c|c|c|c|c|c|c|c|c|}
    \hline
     {\bfseries Criteria } & {\bfseries $Exp_{11}$} & {\bfseries $Exp_{12}$} & {\bfseries $Exp_{13}$} 
     & {\bfseries $Exp_{14}$ } & {\bfseries $Exp_{15}$} & {\bfseries $Exp_{16}$} & {\bfseries $Exp_{17}$} 
     & {\bfseries $Exp_{18}$ } & {\bfseries $Exp_{19}$} & {\bfseries $Exp_{20}$}  
     \\
     
    \hline
    \multirow{1}{*}{\parbox{4cm}{\centering \textbf{\#edges in FDG}}}  & 1000 & 1000  & 1000  & 1000 & 1000 & 1000 & 1000 & 1000 & 1000 & 1000 \\
    \hline
    \multirow{1}{*}{\parbox{4cm}{\centering \textbf{\#edges in Forbidden Join Chains}}}  & 10 & 10 & 10 & 10 & 10 & 10 & 10 & 10 & 10 & 10  \\   
    \hline
    \multirow{1}{*}{\parbox{4cm}{\centering \textbf{\#Forbidden Join Chains}}}   & 100 & 80 & 60 & 40 & 20 & 100 & 100 & 100 & 100 & 100 \\   
    \hline
    \multirow{1}{*}{\parbox{4cm}{\centering \textbf{\#Required Attribute Sets}}}   & 50 & 50 & 50 & 50 & 50 & 50 & 50 & 50 & 50 & 50 \\   
    \hline
    \multirow{1}{*}{\parbox{4cm}{\centering \textbf{\#Join Chains per Required Attribute Set} }}  & 10 & 10 & 10 & 10 & 10 & 10 & 10 & 10 & 10 & 10 \\   
    \hline
    \multirow{1}{*}{\parbox{4cm}{\centering \textbf{\#edge per Join Chain for Required Attribute Set}}} & 10 & 10 & 10 & 10 & 10 & 10 & 20 & 30 & 40 & 50 \\   
    \hline
    \multirow{1}{*}{\parbox{4cm}{\centering \textbf{\#Duration}}}  & 2 ms & 2 ms & 1 ms & 1 ms & 1 ms & 1 ms & 1 ms & 1 ms & 1 ms & 1 ms \\   
    \hline
    \end{tabular}
    \end{small} 
\end{table*}}

\section{Experiments}

Constructing functional dependency graph of a given schema is a straightforward task and its time complexity is $O( \left| F^+\right| )$ which is negligable for a proactive process. Additionally, the time complexity of generating join chains  for given \quotes{Required} or \quotes{Forbidden} sets mainly depends on the, \quotes{all simple paths} \cite{son} algorithm, which corresponds to the main step of the process. Hence, also the overall algorithm is related to time complexity of DFS which is also negligible for a proactive solution. The main time consuming part of the whole process is the Algorithm-4, since it is an exhaustive algorithm. The algorithm can be implemented in two different directions:

\begin{itemize}

\item Checking each \quotes{Forbidden} set according to a brute-force selection on \quotes{Required} set (Implementation Strategy-I as given in Algorithm-4).
\item Checking each \quotes{Required} set according to a brute-force selection on \quotes{Forbidden} set (Implementation Strategy-II).
\end{itemize}

Since the problem is NP-complete some new heuristics may also be added to the algorithm which might reduce only the expected execution time. The experiments are performed by choosing meaningfull database parameters. It is important to note that, any logical database can be divided into sub-schemas in terms of join chains, so the algorithm can be repeated at each sub-schema easily. The timings are collected using a i7, 16 GB RAM machine and heuristics are used during implementation for a better result, such as checking the forbidden against allowed, or vice versa up to their counts.

Table-2 presents the results of the algorithm when each \quotes{Required} set is checked according to a brute-force selection on \quotes{Forbidden} set and respectively Table-3 depicts the results for the  implementation strategy-II.

These benchmarks show that the algorithm is scalable for even large database sizes. The algorithm is exponential for the worst case for both strategies, but, the better strategies can be desing according to the brute-force selection set size. As shown in the experiments, the timings are mainly dependedent on the brute-force selected set size.

\section{Related Work}

Database privacy and inference problem has been very popular and several works in this field influenced us in developing the new strategy proposed in this paper. Inference problem has been discussed in many papers \cite{r4,r5,r8,r9,r29}, but most of them are about reactive solutions. These kind of approaches include query rewriting mechanisms \cite{r7}, data perturbation methods \cite{r10,r16}, deception strategies\cite{x1} and decomposition-based approaches\cite{r12,A1,A2}. The basis of main data perturbation methods is “Differential Privacy”, whose idea is try to add  noise to the query result in order to prevent the identification while producing meaningful result. Differential Privacy method is a big step in the literature to prevent inference attacks; however its usage is limited to the statistical analysis. Another approach, known as deception mechanism, aims to corrupt the data by inserting anonymous data or structures, however its applicability is also limited as Differential Privacy.  K-anonymity \cite{r3} is a another leading research on this field, but differential privacy has proven the hardness of satisfying “non-identifiability” problem\cite{x2} for dynamic data distribution.

Application usage of database basically needs single-row identification with actual values (as described in the Introduction section). For this kind of applications, the methods developed to prevent the identification can be categorized as being reactive or proactive. Reactive methods tend to behave dynamically according to the policy or data distribution. Query rewriting techniques (as in Truman and Non-Truman Models called by Elisa Bertino et.al.'s paper \cite{r8}) are reactive solutions to the inference problem. Query history tracking mechanisms and Chinese-Wall method \cite{r11} like approaches are subject to performance issues, because of being reactive.

To determine the purpose \cite{r2} of the user during privacy protection is a major step. During these checks, attribute-based granularity \cite{r22} should be used to preserve precise privacy. The idea proposed in this paper is totally proactive, as normalization process, and can also be supported by reactive methods to construct a complete mechanism. Database security policy should be checked against visibility \cite{r6} requirements and the external layer should be constructed accordingly. This paper states a complete, optimized and applicable decomposition strategy compared to \cite{r12}, \cite{A1} and \cite{A2}. The aim of this paper is to perform the decomposition with policy check, minimal loss of dependencies and by taking care of indirect dependencies (called as ‘probabilistic dependency in \cite{r12}). The related works propose an effective way of decomposition database in a somehow similar manner; nevertheless this paper combines maximal availability, intended privacy, policy check and indirect dependencies to carry out a definite decomposition for the external layer.

\section{Conclusion and Future Work}

The approach given in this paper is a proactive cross-control of the required and the forbidden attribute sets of relational schema, which is achieved using a secure decomposition technique to produce an external schema with maximal availability and minimal loss of the dependencies. As a future work, this optimization process can be further improved by considering the query statistics of the users, and integrating applicable reactive control mechanisms. Even though the method presented in this paper is proactive, experiments show that the timings are acceptable even for a reactive-like behavior in future. We have experienced the benefits of this approach as given in a real-life example. 


\balance

\bibliographystyle{abbrv}
\bibliography{vldb_sample}

\end{document}